\newtheorem{theorem}{\bf{Theorem}}[section]
\newtheorem{lem}[theorem]{Lemma}
\newtheorem{prop}[theorem]{Proposition}
\theoremstyle{plain}
\newcounter{defno}[section]
\renewcommand{\thedefno}{\arabic{section}.\arabic{defno}}
\newcounter{remno}[section]
\renewcommand{\theremno}{\arabic{section}.\arabic{remno}}
\newenvironment{definition}[1][Definition \thedefno]{\refstepcounter{defno}\begin{trivlist}
\item[\hskip \labelsep {\bfseries #1}]}{\end{trivlist}}
\newenvironment{remark}[1][Remark \theremno]{\refstepcounter{remno}\begin{trivlist}
\item[\hskip \labelsep {\bfseries #1}]}{\end{trivlist}}
\def\input@path{{sections/}}
\newcommand*\circled[1]{\tikz[baseline=(char.base)]{
            \node[shape=circle,draw,inner sep=0.5pt] (char) {#1};}}
\title{\LARGE \bf
Graph Distances and Controllability of Networks
}
\author{A. Yasin Yaz{\i}c{\i}o\u{g}lu, Waseem Abbas, and Magnus Egerstedt\\
\thanks{A. Yasin Yaz{\i}c{\i}o\u{g}lu is with the Laboratory for Information and Decision Systems, Massachusetts Institute of Technology, {\tt yasiny@mit.edu}. \newline \indent
Waseem Abbas is with the Institute for Software Integrated Systems, Vanderbilt University, {\tt waseem.abbas@vanderbilt.edu}
\newline \indent Magnus Egerstedt is with the School of Electrical and Computer Engineering, Georgia Institute of Technology, {\tt magnus@gatech.edu}. \newline \indent The paper was presented in part at the $51^{\text{st}}$ IEEE Conference on Decision and Control, Maui, HI, December 10-13, 2012 (see \cite{Yasin12}). }
}
\begin{document}

\maketitle
\begin{abstract}
In this technical note, we study the controllability of diffusively coupled networks from a graph theoretic perspective. We consider leader-follower networks, where the external control inputs are injected to only some of the agents, namely the leaders. Our main result relates the controllability of such systems to the graph distances between the agents. More specifically, we present a graph topological lower bound on the rank of the controllability matrix. This lower bound is tight, and it is applicable to systems with arbitrary network topologies, coupling weights, and number of leaders. An algorithm for computing the lower bound is also provided. Furthermore, as a prominent application, we present how the proposed bound can be utilized to select a minimal set of leaders for achieving controllability, even when the coupling weights are unknown. 
\end{abstract}

\vspace{-0.15in}
\section{Introduction}
Networks of diffusively coupled agents appear in numerous systems such as sensor networks (e.g., \cite{Speranzon06}), distributed robotics (e.g., \cite{Jadbabaie03}), power grids (e.g., \cite{Dorfler12}), social networks (e.g., \cite{Ghaderi13}), and biological systems (e.g., \cite{Gu15}). A central question regarding such networks is whether a desired global behavior can be induced by directly manipulating only a small subset of the agents, referred to as the ``leaders" in the network.  This question has motivated numerous studies on the controllability of networks. In particular, there has been a large interest in relating the \textit{network controllability} to the structure of the interaction graph. In this technical note, we relate the controllability of diffusively coupled agents with single integrator dynamics to the distances between the nodes on the interaction graph.

%

Various graph theoretic tools have recently been utilized to provide some topology based characterizations of network controllability. Some of the graph theoretic constructs that are widely employed for this purpose include \textit{equitable partitions} (e.g., \cite{Rahmani09,Egerstedt12}), \textit{maximum matching} (e.g., \cite{Liu11,Chapman13}), \textit{centrality based measures} (e.g., \cite{YLiu12,Pan14}), and \textit{dominating sets} (e.g., \cite{Nacher14}). Recently, the \textit{graph distances} have been used to acquire further insight on how the network structure and the locations of the leaders influence the network controllability \cite{Zhang11,Zhang14,Yasin12}. The graph distances, which rely purely on the shortest paths between the nodes, provide a computationally tractable and perceptible characterization of the graph structure. Thus, the distance-based relationships reveal some innate connections between the network topology and the network controllability.


The main contribution of this technical note is a distance-based tight lower bound on the dimension of the controllable subspace (Theroem \ref{lbound}). The bound is generic in the sense that it is applicable to systems with arbitrary network topologies, coupling weights, and number of leaders. Based on the distances between the leaders and the followers, we first define the distance-to-leaders (DL) vectors. Then, we define a certain ordering rule and derive the lower bound as the maximum length of the sequences of the DL vectors that satisfy the rule (Section \ref{controllability}). An algorithm to compute the proposed bound is also presented in Section \ref{compute}.  

A prominent attribute of the proposed bound is that, unlike the dimension of the controllable subspace, it does not depend on the coupling weights. Thus, the bound is useful in many applications, particularly when the information about the network is incomplete. As an example, in Section \ref{sec:lselect}, we present how the bound can be used in leader selection for achieving the controllability of any given network, even when the coupling weights are unknown. Finally, some conclusions are provided in Section \ref{conclusion}.

\vspace{-0.15in}
\section{ Preliminaries}
\label{prelim}
\subsection{Graph Theory}
A graph $\mathcal{G}=(V,E)$ consists of a node set ${V=\{1,2,\hdots,n\}}$ and an edge set $E \subseteq V \times V$. For an undirected graph, each edge is represented as an unordered pair of nodes. For each $i \in V$, let $\mathcal{N}_i$ denote the \emph{neighborhood} of $i$, i.e., 
\begin{equation}
\label{neigh}
\mathcal{N}_i = \{ j \in V \mid (i,j) \in E \}
\end{equation}
A path between a pair of nodes $i,j \in V$ is a sequence of nodes $\{i, \hdots, j\}$ such that each pair of consecutive nodes are linked by an edge. The \emph{distance} between the nodes, $dist(i,j)$, is equal to the number of the edges that belong to the shortest path between the nodes. A graph is \emph{connected} if there exists a path between every pair of nodes. A graph is weighted if there is a corresponding weighting function $w: E \mapsto \mathbb{R}^+$. The \emph{adjacency matrix}, $\mathcal{A}$, of a weighted graph is defined as
\begin{equation}
\label{Adj}
\mathcal{A}_{ij}=\left\{\begin{array}{ll}w(i,j)&\mbox{ if } 
(i,j)\in E \\ 0&\mbox{ otherwise. }\end{array}\right.
\end{equation}
For any adjacency matrix $\mathcal{A}$, the corresponding \emph{degree matrix}, $\Delta$, is defined as
\begin{equation}
\label{Deg}
\Delta_{ij}=\left\{\begin{array}{ll}\sum_{k\in \mathcal{N}_{i}}\mathcal{A}_{ik}&\mbox{ if } 
i=j\\0&\mbox{ otherwise, }\end{array}\right.
\end{equation}
The \emph{graph Laplacian}, $L$, is defined as the difference of the degree matrix and the adjacency matrix, i.e.,
\begin{equation}
\label{Laplacian}
 L = \Delta-\mathcal{A}. 
\end{equation}

\vspace{-0.2in}
\subsection{Leader-Follower Networks}
A network of diffusively coupled agents can be represented as a graph, where the nodes correspond to the agents, and the weighted edges exist between the coupled agents. For such a network $\mathcal{G}=(V,E)$, let the dynamics of each agent $i\in V$ be
\begin{equation}
\label{consensus}
\dot{x}_i=\sum_{j\in \mathcal{N}_i}w(i,j)(x_j-x_i),
\end{equation}
where $x_i$ denotes the state of $i$,  and $w(i,j) \in \mathbb{R}^+$ represents the strength of the coupling between $i$ and $j$. 

In a \emph{leader-follower} setting, the objective is to drive the overall system by injecting external control inputs to some of the nodes, which are called the \emph{leaders}. The set of leaders can be represented as $\mathcal{L}=\{l_1, \hdots, l_m \} \subseteq V$, where, without loss of generality, the leaders are labeled such that $l_j<l_{j+1}$. For any leader-follower network, a global state vector $x$ can be obtained by stacking the states of all the nodes. Without loss of generality, let $x \in \mathbb{R}^n$, and let $u \in \mathbb{R}^m$ be the control input injected to the leaders. Then, the overall dynamics of the system can be expressed as 
\begin{equation}
\label{sys}
\dot{x}=-Lx+Bu,
\end{equation}
where  $B$ is an $n \times m$ matrix  with the following entries
\begin{equation}
\label{Beq}
B_{ij}=\left\{\begin{array}{ll}1&\mbox{ if $i=l_j$. } 
 \\ 0&\mbox{ otherwise. }\end{array}\right.
\end{equation}


For the system in (\ref{sys}), the \emph{controllable subspace} consists of the states that can be reached from $x(0)=\bold{0}$ in any finite time via an appropriate choice of $u(t)$. The controllable subspace is the range space of the \emph{controllability matrix}, i.e.,
\begin{equation}
\label{Gamma_M}
 \Gamma = \left[ \begin{array}{ccccc}
 B  & (-L)B  & (-L)^2B & \hdots & (-L)^{n-1}B \end{array}
\right] .
\end{equation}

\vspace{-0.15in}
\section{Leader-Follower Distances and Controllability}
\label{controllability}

In this section, we present a connection between the controllability of networks and the distances of the nodes to the leaders on the interaction graph. More specifically, we utilize such distances to define a tight lower bound on the dimension of the controllable subspace, i.e. the rank of the controllability matrix. First, we provide some definitions prior to our analysis. 

\begin{definition} \emph{(Distance-to-Leaders (DL) Vector)}:
For each node $i$ in a network with $m$ leaders, the DL vector $d_i\in \mathbb{R}^m$ is defined as
\begin{equation}
\label{dldef}
d_{i,j} = dist(i,l_j),
\end{equation}
where $d_{i,j}$ denotes the $j^{th}$ entry of $d_i$, and $l_j$ denotes the $j^{th}$ leader for $j \in \{1,2, \hdots, m\}.$
\end{definition}
In our analysis, we utilize a specific sequence, which we define as a \emph{pseudo-monotonically increasing sequence}, of the DL vectors. For any vector sequence $D$, let $D_i$ be the $i^{th}$ vector in the sequence, and let $D_{i,j}$ denote the $j^{th}$ entry of $D_{i}$. 


\begin{definition}  \emph{(Pseudo-Monotonically Increasing (PMI) Sequence)}:
A sequence $D$ of vectors, where each vector is in $\mathbb{R}^m$, is a PMI sequence if for every $D_i$ there exists some $\alpha(i)\in\{1,2,\cdots,m\}$ such that

\begin{equation}
\label{rule}
D_{i,\alpha(i)} \; < \; D_{j,\alpha(i)},\;\;\forall i<j.
\end{equation}
\label{def:PMI}
\end{definition}
Condition (\ref{rule}) simply means that if $D_i$ is the $i^{th}$ vector in the sequence with $D_{i,\alpha(i)}$ being its $\alpha(i)^{th}$ entry, then the corresponding (i.e. $\alpha(i)^{th}$) entries of all the subsequent vectors in the sequence should be greater than $D_{i,\alpha(i)}$.

\textit{Example -- }
Consider the network shown in Fig. \ref{exfig}. For this network, one can build a PMI sequence of five DL vectors as
\begin{equation*}
\begin{split}
D & =    \left( D_1,  D_2,  D_3,  D_4,  D_5\right)  =
\left( d_1,  d_6, d_5,  d_3, d_4\right)\\
& = \left(\left[\begin{array}{cc}\circled{0}\\3\\\end{array}\right] ,  \left[\begin{array}{cc}3\\\circled{0}\\ \end{array}\right], \left[\begin{array}{cc}2\\\circled{1}\\ \end{array}\right], \left[\begin{array}{cc}\circled{1}\\2\\ \end{array}\right], \left[\begin{array}{cc} \circled{2}\\2\\\end{array}\right]\right),
\end{split}
\end{equation*}
where, for each $D_i$, the $\alpha(i)^{th}$ entry that satisfies (\ref{rule}) is encircled. For instance, consider $D_1$, for which $\alpha(1)=1$ and $D_{1,1}=0$. Note that the first entries of all the subsequent vectors are greater than 0. Similarly, for the second vector $D_2$, $\alpha(2)=2$ and $D_{2,2}=0$. Note that $D_{j,2}>0$ for $j>2$, and so on. For this example, another PMI sequence of five DL vectors could be build as $\left(d_1,  d_3,  d_6,  d_5,  d_4 \right)$, where $\alpha(1)= \alpha(2)=1$ and  $\alpha(3)= \alpha(4)= \alpha(5)= 2$. 

\begin{figure}[htb]
\begin{center}
\includegraphics[scale=0.8]{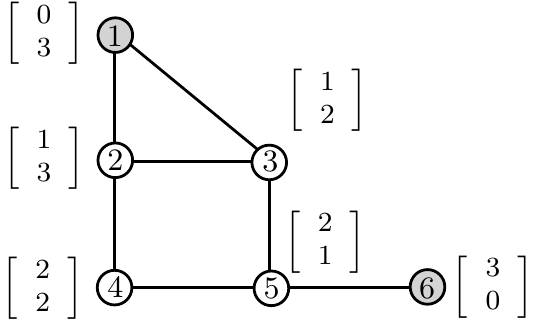}
\caption{A leader-follower network with two leaders (shown in gray), $l_1=1$ and $l_2=6$. Each node $i$ has its DL vector $d_i$ given next to itself. } 
\label{exfig}
\end{center}
\end{figure}


For any connected $\mathcal{G}=(V,E)$ and any set of leaders $\mathcal{L} \subseteq V$, let $\mathcal{D}_\mathcal{L}$ denote the set of all PMI sequences of the corresponding DL vectors. Furthermore, let $\delta_\mathcal{L}$ denote the length of the longest sequence in $\mathcal{D}_\mathcal{L}$, i.e. 
\begin{equation}
\label{lbeq}
\delta_\mathcal{L}=\underset{D \in \mathcal{D}_\mathcal{L}}{\textrm{max}}  |D|.
\end{equation}
In the following analysis, we show that, for any weighting function $w: E \mapsto \mathbb{R}^+$, the rank of the resulting controllability matrix is lower bounded by $\delta_\mathcal{L}$.

\begin{lem}
\label{lem1} Let $\mathcal{G}=(V,E)$ be a connected graph. Then, for any weighting function $w: E \mapsto \mathbb{R}^+$,
\begin{equation}
\label{propeq}
[(-L)^r ]_{ij} \left\{\begin{array}{ll} = 0&\mbox{ if $0\leq r < dist(i,j), $} 
 \\ \neq 0 &\mbox{ if $r = dist(i,j) $, }\end{array}\right.
\end{equation}
where $dist(i,j)$ is the distance between $i$ and $j$ on $\mathcal{G}$.
 \end{lem}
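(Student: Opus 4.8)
The plan is to prove both claims by relating powers of the Laplacian to walks on the graph. The key observation is that $(-L)^r$ has a combinatorial interpretation: its $(i,j)$ entry is a signed, weighted sum over all walks of length exactly $r$ from $i$ to $j$. Writing $-L = \mathcal{A} - \Delta$, where $\mathcal{A}$ is the (nonnegative) weighted adjacency matrix and $\Delta$ is diagonal with positive entries, I would expand $(-L)^r = (\mathcal{A}-\Delta)^r$ and track which products of entries can contribute to position $(i,j)$. The essential point is that every nonzero term in this expansion corresponds to a walk $i = v_0, v_1, \ldots, v_r = j$ in which each consecutive pair $(v_t, v_{t+1})$ is either an edge (contributing a factor $\mathcal{A}_{v_t v_{t+1}}$) or a self-loop $v_t = v_{t+1}$ (contributing the diagonal factor $-\Delta_{v_t v_t}$). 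Thus any contributing term traverses an actual walk of length $r$ from $i$ to $j$ in $\mathcal{G}$, where self-loops are permitted as stationary steps.

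For the first claim, suppose $0 \le r < dist(i,j)$. Since every contributing term corresponds to a walk from $i$ to $j$ that uses at most $r$ genuine edge-steps (the self-loop steps do not change position), reaching $j$ from $i$ would require at least $dist(i,j)$ edge-steps, which is impossible when $r < dist(i,j)$. Hence no term contributes and $[(-L)^r]_{ij} = 0$. This direction is clean and essentially definitional once the walk interpretation is established.

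The second claim, that $[(-L)^r]_{ij} \neq 0$ when $r = dist(i,j)$, is where the real work lies, and I expect it to be the main obstacle. Here the contributing walks of length $r = dist(i,j)$ must use exactly $dist(i,j)$ edge-steps and zero self-loops, so they are precisely the shortest paths from $i$ to $j$ (for $i \neq j$). Each such shortest path contributes a strictly positive product of edge weights $\prod_t \mathcal{A}_{v_t v_{t+1}} > 0$, since all weights lie in $\mathbb{R}^+$. Because every shortest path contributes a term of the same sign (positive, with no $-\Delta$ factors appearing when the self-loop count is zero) and there is at least one shortest path, the sum cannot cancel and is strictly positive. The delicate part is verifying that no negative contributions can sneak in at $r = dist(i,j)$: any walk of that length that uses even a single self-loop step must compensate with fewer than $dist(i,j)$ edge-steps and therefore cannot reach $j$, so all terms are genuinely positive. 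For the diagonal case $i = j$ with $dist(i,i) = 0$, the claim reduces to $[(-L)^0]_{ii} = 1 \neq 0$, which holds trivially. I would make the sign-tracking rigorous by an induction on $r$, expressing $[(-L)^{r}]_{ij} = \sum_k [(-L)^{r-1}]_{ik}(-L)_{kj}$ and using the distance-layering structure: only the neighbors $k$ of $j$ with $dist(i,k) = r-1$ contribute at the critical power, and each such contribution is a product of the (positive, by the inductive hypothesis) entry $[(-L)^{r-1}]_{ik}$ with the positive off-diagonal weight $\mathcal{A}_{kj} = -(-L)_{kj}$ wait—here one must be careful with the sign convention, so I would instead carry the induction on the unsigned quantity, establishing that $[(-L)^{dist(i,j)}]_{ij}$ equals a positive sum of products of edge weights along shortest $i$-$j$ paths, which is manifestly nonzero.
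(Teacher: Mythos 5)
Your proof is correct and takes essentially the same route as the paper: the paper likewise expands $(-L)^r=(\mathcal{A}-\Delta)^r$, observes that every mixed term involving $\Delta$ has the same zero pattern as the corresponding power of $\mathcal{A}$ (your self-loop/stationary-step bookkeeping), and concludes from the weighted-walk interpretation of $[\mathcal{A}^k]_{ij}$ that all terms vanish for $r<dist(i,j)$ while at $r=dist(i,j)$ only the all-positive shortest-walk contributions survive, so no cancellation can occur. Your closing sign worry is in fact moot, since for $k\neq j$ one has $(-L)_{kj}=\mathcal{A}_{kj}>0$, so the inductive variant you sketch also goes through directly.
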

\begin{proof}
Using (\ref{Laplacian}), $(-L)^r$ can be expanded as
 \begin{equation}
\label{A_power}
(-L)^r=(\mathcal{A}-\Delta)^r=\mathcal{A}^r+\sum_{m=0}^{r-1}(-1)^{r-m}\mathcal{S}_{m},
\end{equation}
where $\mathcal{S}_{m}$ denotes the sum of all matrices that can be expressed as a multiplication of $m$ copies of $\mathcal{A}$ and $r-m$ copies of $\Delta$. For instance, if $r=2$, then $\mathcal{S}_0= \Delta^2$ and  $\mathcal{S}_1= \mathcal{A} \Delta + \Delta \mathcal{A}$.
Note that, for $w: E \mapsto \mathbb{R}^+$, any matrix that can be represented as such a multiplication has only non-negative entries since $\Delta$ and $\mathcal{A}$ have only non-negative entries. Moreover, $\Delta$ is a diagonal matrix and, for any connected graph, it has only positive entries on the main diagonal. As such, it does not alter the signs of entries when multiplied by a matrix. Hence, 

 \begin{equation}
\label{SA}
[\mathcal{S}_{m}]_{ij}=0 \Leftrightarrow [\mathcal{A}^{m}]_{ij}=0.
\end{equation}
Using (\ref{A_power}), 
\begin{equation}
\label{A_kb_i}
[(-L)^r]_{ij}=[\mathcal{A}^{r}]_{ij} +\sum_{m=0}^{r-1}(-1)^{r-m} [\mathcal{S}_{m}]_{ij}.
\end{equation}
Since $\mathcal{A}$ is the adjacency matrix of a weighted graph with positive edge weights, $[\mathcal{A}^{k}]_{ij}$ is equal to a positive scalar times the number of walks of length $k$ from node $i$ to node $j$. Hence, $[\mathcal{A}^{k}]_{ij}=0$ for all $0\leq k<dist(i,j)$, and $[\mathcal{A}^{dist(i,j)}]_{ij} \neq 0$ for any connected graph. Consequently, (\ref{SA}) and (\ref{A_kb_i}) together imply  (\ref{propeq}). 
 \end{proof}

\begin{theorem}
\label{lbound}
Let $\mathcal{G}=(V,E)$ be a connected graph, and let $\mathcal{L}\subseteq V$ be the set of leaders. Then, for any weighting function $w: E \mapsto \mathbb{R}^+$, the  controllability matrix $\Gamma$ satisfies   \begin{equation}
\label{lboundeq}
rank(\Gamma)\geq  \delta_\mathcal{L},
\end{equation}
where $\delta_{\mathcal{L}}$ is defined in (\ref{lbeq}).
\end{theorem}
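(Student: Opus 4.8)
The plan is to exhibit $\delta_\mathcal{L}$ columns of the controllability matrix $\Gamma$ that are linearly independent; this immediately gives $rank(\Gamma)\geq\delta_\mathcal{L}$. The candidates come directly from an optimal PMI sequence. So first I would fix a PMI sequence $D=(d_{i_1},\ldots,d_{i_\delta})$ of maximal length $\delta=\delta_\mathcal{L}$, realized by nodes $i_1,\ldots,i_\delta$, and for each position $k$ let $\alpha(k)$ be the witnessing index of Definition \ref{def:PMI} and set $r_k = d_{i_k,\alpha(k)} = dist(i_k,l_{\alpha(k)})$. A preliminary remark is that the nodes $i_1,\ldots,i_\delta$ are necessarily \emph{distinct}: if $i_k=i_{k'}$ with $k<k'$ then $D_k=D_{k'}$, which makes the strict inequality (\ref{rule}) read $D_{k,\alpha(k)}<D_{k,\alpha(k)}$, a contradiction.

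The key observation is that each pair $(r_k,\alpha(k))$ names an honest column of $\Gamma$. Since the $j$th column of $B$ is the standard basis vector $e_{l_j}$, the vector $v_k := (-L)^{r_k} B e_{\alpha(k)} = (-L)^{r_k} e_{l_{\alpha(k)}}$ is the $\alpha(k)$th column of the block $(-L)^{r_k}B$, and it is a genuine column of $\Gamma$ because $r_k = dist(i_k,l_{\alpha(k)}) \leq n-1$ on a connected graph with $n$ nodes. The $i$th entry of $v_k$ is exactly $[(-L)^{r_k}]_{i,\,l_{\alpha(k)}}$, which is precisely the quantity whose vanishing pattern is governed by Lemma \ref{lem1}.

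I would then prove the independence of $v_1,\ldots,v_\delta$ by passing to the $\delta\times\delta$ submatrix $M$ obtained by keeping only the rows indexed by the distinct nodes $i_1,\ldots,i_\delta$, so that $M_{a,b}=[v_b]_{i_a}=[(-L)^{r_b}]_{i_a,\,l_{\alpha(b)}}$, and showing that $M$ is upper triangular with a nonvanishing diagonal. The diagonal entries $M_{k,k}=[(-L)^{r_k}]_{i_k,\,l_{\alpha(k)}}$ are nonzero by the second case of Lemma \ref{lem1}, because $r_k$ equals $dist(i_k,l_{\alpha(k)})$. For the entries strictly below the diagonal, i.e. $a>b$, the PMI condition (\ref{rule}) applied at position $b$ yields $r_b = d_{i_b,\alpha(b)} < d_{i_a,\alpha(b)} = dist(i_a,l_{\alpha(b)})$; hence the power $r_b$ is strictly smaller than the relevant distance, and the first case of Lemma \ref{lem1} forces $M_{a,b}=0$. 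Therefore $\det M = \prod_k M_{k,k} \neq 0$, so $M$ is nonsingular, the columns $v_1,\ldots,v_\delta$ of $\Gamma$ are linearly independent, and $rank(\Gamma)\geq \delta = \delta_\mathcal{L}$.

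I do not anticipate a genuine obstacle here, since the argument is a direct application of Lemma \ref{lem1}; the only care needed is bookkeeping. The delicate point to get right is the alignment of the PMI ordering with the ``zero below the diagonal'' pattern, namely that it is the witness $\alpha(b)$ of the \emph{earlier} vector $D_b$ that makes the \emph{later} rows $i_a$ ($a>b$) vanish in column $b$, together with the (already verified) facts that the sequence uses distinct nodes and that every exponent $r_k$ is at most $n-1$ so that each $v_k$ really is a column of $\Gamma$.
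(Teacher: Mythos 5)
Your proposal is correct and takes essentially the same route as the paper: both extract the columns $(-L)^{r_p}b_{\alpha(p)}$ of $\Gamma$ prescribed by a maximal PMI sequence and invoke Lemma \ref{lem1} to establish a triangular zero/nonzero pattern forcing linear independence. Your explicit $\delta_\mathcal{L}\times\delta_\mathcal{L}$ upper-triangular submatrix $M$, the distinctness check on the nodes $i_1,\ldots,i_{\delta_\mathcal{L}}$, and the verification that $r_k\leq n-1$ simply make rigorous the bookkeeping the paper compresses into the remark that each column contains the left-most non-zero entry in some rows.
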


\begin{proof}
For any connected $\mathcal{G}=(V,E)$ and any set of leaders $\mathcal{L}$, let $\{d_1,d_2,\cdots,d_{n}\}$ be the corresponding  DL vectors. Let $D=\left(D_1,D_2,\cdots,D_{\delta_{\mathcal{L}}}\right)$ be a PMI sequence of maximum length. Now, consider vectors of the form
\begin{equation}
\label{M_form}
 (-L)^{r_p}b_{\alpha(p)},
\end{equation} 
where $\alpha(p)$ is the index of $D_p$ as per the definition of a PMI sequence (Definition \ref{def:PMI}), $r_p = D_{p,\alpha(p)}$, and $b_{\alpha(p)}$ denotes the $\alpha(p)^{th}$ column of the input matrix $B$. If $D_p$ corresponds to the DL vector of node $i$, i.e. $D_p = d_i$, then  $r_p = d_{i,\alpha(p)}$. As a result of Lemma \ref{lem1}, $i^{th}$ entry of the vector in (\ref{M_form}) is non-zero. Also, for any node $j$ with $d_{j,\alpha(i)} > d_{i,\alpha(i)}$, the $j^{th}$ entry of the vector in (\ref{M_form}) equal to zero. Using this along with the definition of PMI sequences, we conclude that the $n \times \delta_\mathcal{L}$ matrix \begin{equation}
\label{full_rank}
\left[ \begin{array}{ccccc}
  (-L)^{r_1}b_{\alpha(1)} &   (-L)^{r_2}b_{\alpha(2)}  &\hdots & (-L)^{r_{ \delta_{\mathcal{L}}}}b_{\alpha( \delta_{\mathcal{L}})} \end{array}
\right],   
\end{equation}
has a full column rank since each column contains the left-most non-zero entry in some rows. Note that for every $p \in \{1,2,\hdots, \delta_{\mathcal{L}}\}$, we have $r_p=D_{p,\alpha(p)} \leq n-1$ since the distance between any two nodes is always smaller than or equal to $n-1$. Hence, each column of the matrix in (\ref{full_rank}) is also a column of $\Gamma$. Consequently, $rank(\Gamma) \geq \delta_\mathcal{L}$. 
\end{proof}

Since (\ref{lboundeq}) holds for any weighting function ${w: E \mapsto \mathbb{R}^+}$, the proposed lower bound is closely related to the notion of  \textit{strong structural controllability}.  A network is said to be structurally controllable if and only if there exists ${w: E \mapsto \mathbb{R}^+}$ such that the resulting controllability matrix is of full rank \cite{Lin74}. Furthermore, a network is said to be strongly structurally controllable if and only if for any ${w: E \mapsto \mathbb{R}^+}$, the resulting controllability matrix is of full rank \cite{Mayeda79}. In this regard, the dimension of controllable subspace in the sense of strong structural controllability can be defined as the minimum of the ranks of all controllability matrices obtained by using arbitrary weighting functions ${w: E \mapsto \mathbb{R}^+}$ \cite{Jarczyk13}. Accordingly, $\delta_{\mathcal{L}}$ is essentially a lower bound on the dimension of controllable subspace in the sense of strong structural controllability. 

We also emphasize that the lower bound in Theorem \ref{lbound} is tight, i.e., there exist $\mathcal{G} =(V,E)$, $\mathcal{L} \subseteq V$, and ${w: E \mapsto \mathbb{R}^+}$ such that $rank(\Gamma)=\delta_{\mathcal{L}}$. Any cycle graph with two adjacent nodes being the leaders, or any path graph with a leaf node being the leader are some of the examples that satisfy (\ref{lboundeq}) with equality. In these two examples, $rank(\Gamma)=\delta_\mathcal{L}$ follows from the fact that both cases lead to $\delta_\mathcal{L}=n$, where $n$ is the total number of nodes in the graph. In the remainder of this section, we present the connections between the proposed lower bound and some closely related distance-based measures, namely the maximum distance to the leaders and the number of unique DL vectors. 

For networks with a single leader, any PMI sequence of the DL vectors consists of one dimensional vectors with monotonically increasing entries. Hence, for such networks, $\delta_{\mathcal{L}}$ is equal to one plus the maximum distance to the leader, which was indeed proposed in \cite{Zhang11} as a lower bound of the controllability matrix for single-leader networks. An extension to the case of multiple leaders was later presented in \cite{Zhang14} by taking the maximum of this value among all the leaders, i.e.

\begin{equation}
\label{dmax}
\mu_\mathcal{L}=\underset{i \in V, j \in \mathcal{L}}{\textrm{max}}dist(i,j)+1.
\end{equation}

The relationship between $\mu_\mathcal{L}$ and  $\delta_\mathcal{L}$ can be seen through the following observation: If one considers only the PMI sequences that satisfy (\ref{rule}) for some fixed entry, i.e.  $\alpha(1)= \hdots = \alpha(|D|)$, then the longest PMI sequence in this constrained set, $\mathcal{D}_\mathcal{L}^* \subseteq \mathcal{D}_\mathcal{L}$, has length $\mu_\mathcal{L}$. Hence, one can conclude that the following inequality holds for any leader-follower network:
 \begin{equation}
\label{dmaxcomp}
rank(\Gamma)\geq  \delta_\mathcal{L} \geq \mu_\mathcal{L}.
\end{equation}
In light of (\ref{dmaxcomp}), while the two quantities are equal for single-leader networks, the proposed lower bound $\delta_\mathcal{L}$ is richer than $\mu_\mathcal{L}$ in capturing the controllability of networks with multiple leaders. In fact, since the maximum distance between any two nodes in a graph is the diameter of the graph by definition, $\mu_\mathcal{L}$ is always less than or equal to one plus the diameter of the graph, even when every node is a leader. In general, the difference between $\delta_\mathcal{L}$ and $\mu_\mathcal{L}$ depends on the graph topology and the leader assignment. For instance, the example in Fig.~\ref{exfig} yields $\delta_\mathcal{L}=5$ and $\mu_\mathcal{L}=3$. Numerical comparisons of the bounds for Erd\"{o}s-Renyi and Barab\'{a}si-Albert graphs with two leaders are illustrated in Fig. \ref{fig:comparison}. In this figure, each point on the plot corresponds to the average value for 50 randomly generated cases (each case is a randomly generated graph and a pair of randomly assigned leaders). The results indicate that $\delta_\mathcal{L}$ provides a significantly better utilization of the graph distances in the controllability analysis of multi-leader networks, even when the network has only a pair of leaders. 



\begin{figure}[htb]
\begin{center}
\includegraphics[scale=0.27]{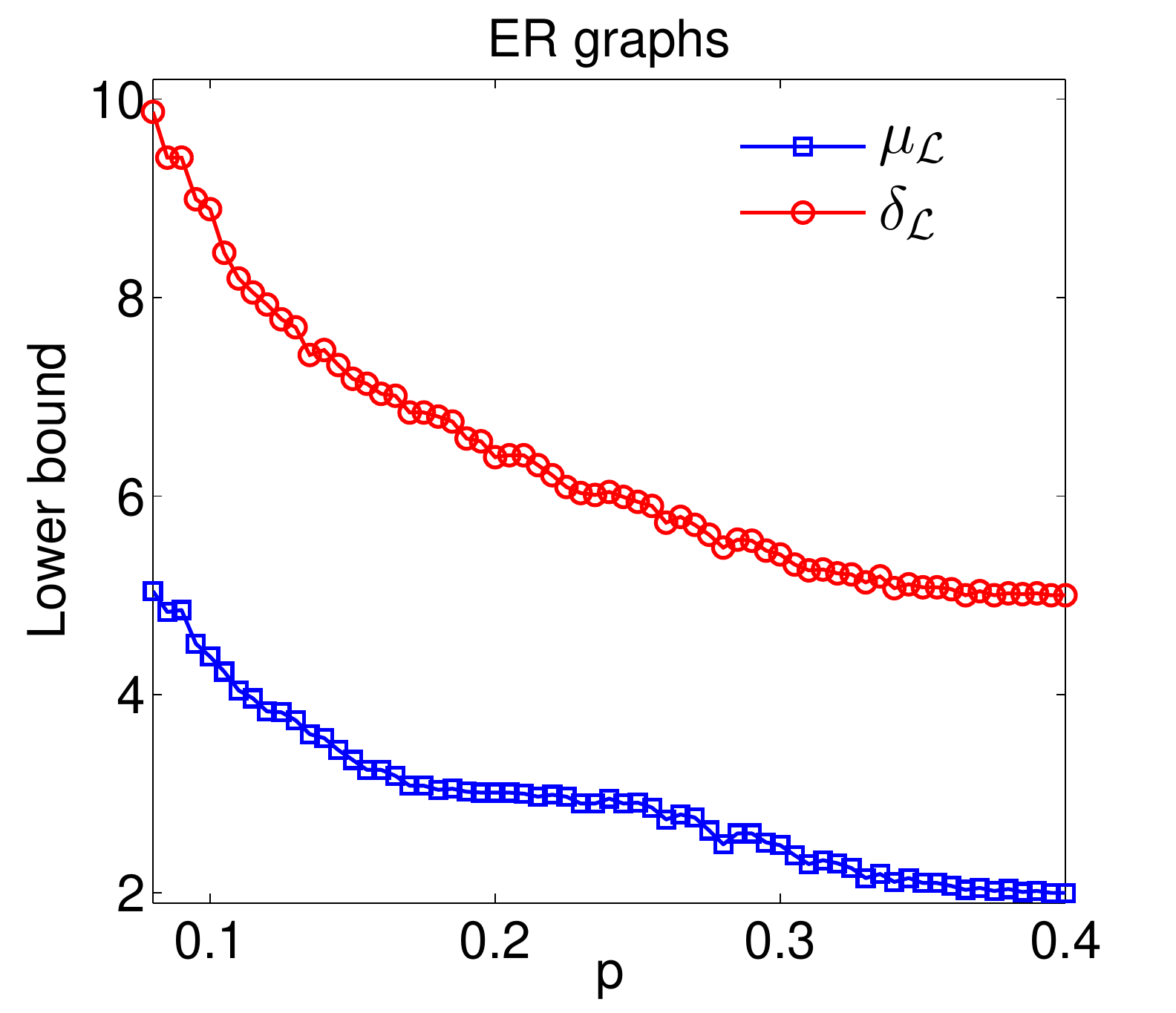}
\includegraphics[scale=0.27]{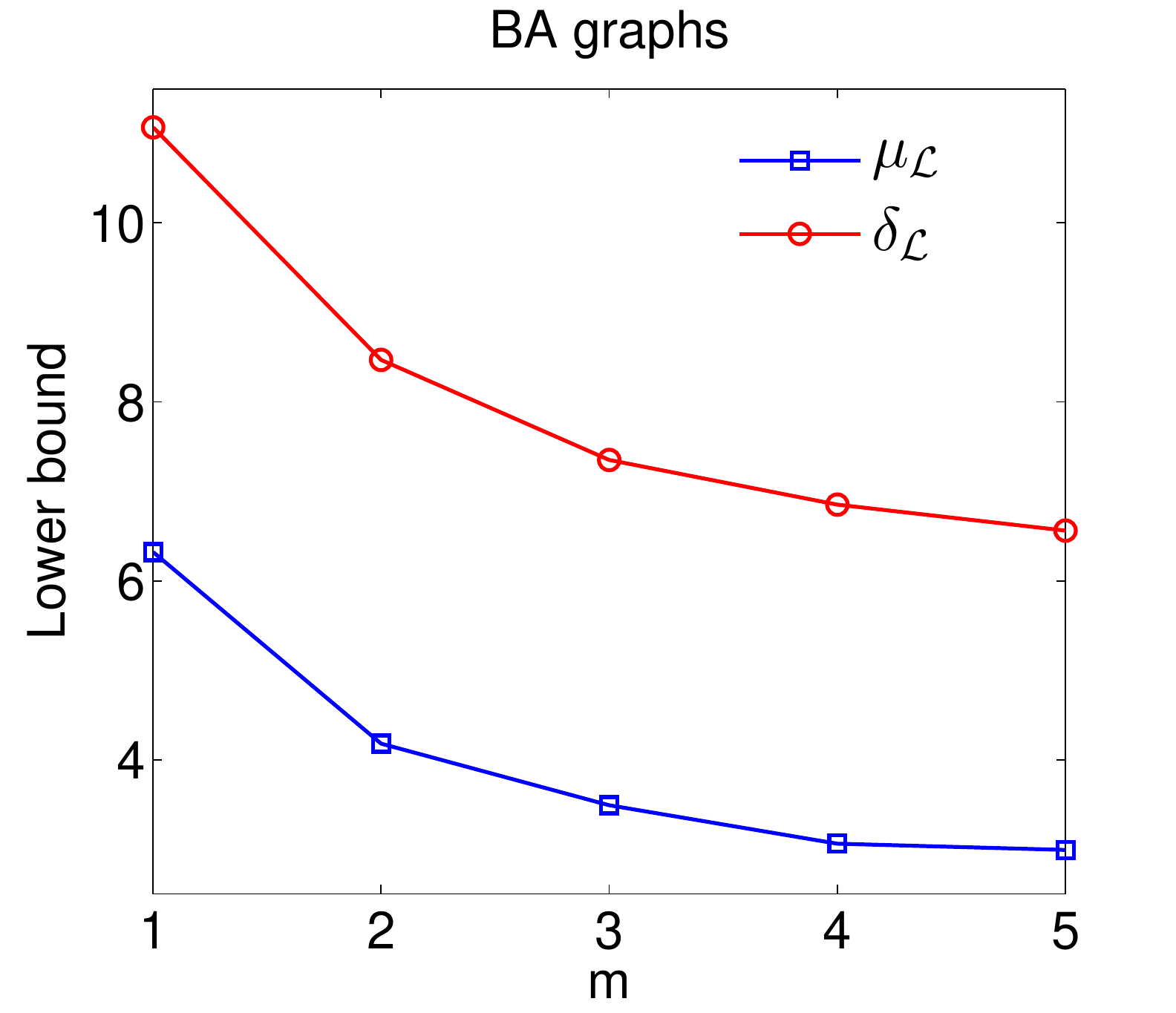}
\caption{Comparison of the lower bounds for two randomly selected leaders on Erd\"{o}s-Renyi random graphs with 50 nodes in which any two nodes are adjacent with the probability $p$; and Barab\'{a}si-Albert graph with 50 nodes in which each new node is connected to $m$ existing nodes through a preferential attachment strategy.}
\label{fig:comparison}
\end{center}
\end{figure}

The proposed lower bound is also closely related to the number of unique DL vectors. For any connected $\mathcal{G}=(V,E)$ and any leader set $\mathcal{L}\subseteq V$, let $\upsilon_\mathcal{L}$ be the number of unique DL vectors. Note that, due to (\ref{rule}), each vector in a PMI sequence has an entry that is strictly smaller than the corresponding entries of all the following vectors in that sequence. Hence, a PMI sequence cannot contain two identical vectors. Consequently, the proposed lower bound is always less than or equal to the number of unique DL vectors, i.e. 
 \begin{equation}
\label{udlb}
\upsilon_\mathcal{L} \geq \delta_\mathcal{L}.
\end{equation}

The relationship in (\ref{udlb}) facilitates a deeper understanding for potential applications of the proposed lower bound. For instance, for any $k \in \{1, \hdots,n\}$, having $\upsilon_\mathcal{L}=k$ is a necessary condition to have $\delta_\mathcal{L}=k$. Hence, it is possible to conclude that a network is completely controllable as per the proposed lower bound, only if each follower has a distinctive DL vector. The relationships in (\ref{dmaxcomp}) and (\ref{udlb}) may naturally yield the question of whether $\upsilon_\mathcal{L}$ can capture $rank(\Gamma)$ better than $\delta_\mathcal{L}$ and $\mu_\mathcal{L}$. In~the following result, we show that there is no such universal relationship between $rank(\Gamma)$ and $\upsilon_\mathcal{L}$. 



%

\begin{prop}
\label{udist2}
For leader-follower networks, the number of unique DL vectors, $\upsilon_\mathcal{L}$, is not a universal bound of $rank(\Gamma)$. \end{prop}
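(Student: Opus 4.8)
The plan is to show that neither of the two possible universal inequalities, $\upsilon_\mathcal{L}\ge rank(\Gamma)$ and $\upsilon_\mathcal{L}\le rank(\Gamma)$, can hold for all choices of $(\mathcal{G},\mathcal{L},w)$. I would establish this with two explicit counterexamples, one violating each direction. For the case $rank(\Gamma)>\upsilon_\mathcal{L}$ a direct rank computation on a small graph suffices, while for $rank(\Gamma)<\upsilon_\mathcal{L}$ I would exhibit a nonzero left null vector of $\Gamma$ arising from an eigenvector of $L$ whose support avoids the leaders.

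First I would handle the direction $rank(\Gamma)>\upsilon_\mathcal{L}$. Take $\mathcal{G}=C_4$ on nodes $\{1,2,3,4\}$ with the two opposite nodes $\mathcal{L}=\{1,3\}$ as leaders. The distances give $d_1=(0,2)$, $d_2=d_4=(1,1)$, $d_3=(2,0)$, so only three distinct DL vectors occur, i.e. $\upsilon_\mathcal{L}=3$. Writing $a,b,c,d$ for the weights of the edges $(1,2),(2,3),(3,4),(4,1)$, the first two blocks $[\,B\;\;(-L)B\,]$ already contain $e_1,e_3$ together with $ae_2+de_4$ and $be_2+ce_4$; these four vectors are independent exactly when $ac\neq bd$, in which case $rank(\Gamma)=4>3=\upsilon_\mathcal{L}$. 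Hence $\upsilon_\mathcal{L}$ is not a universal upper bound on $rank(\Gamma)$.

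For the reverse direction I would exploit the symmetry of $L$. If $v$ is an eigenvector with $Lv=\lambda v$ whose entries vanish on every leader node, then $v^{T}(-L)^{r}b_j=(-\lambda)^{r}v_{l_j}=0$ for all $r$ and all leaders $l_j$, so $v^{T}\Gamma=0$ and $rank(\Gamma)\le n-1$. It therefore suffices to produce a connected graph with $n$ \emph{distinct} DL vectors (so $\upsilon_\mathcal{L}=n$) admitting such an eigenvector. A convenient choice is the path $1\!-\!2\!-\!3\!-\!4\!-\!5$ with the two interior nodes $\mathcal{L}=\{2,4\}$ as leaders: the DL vectors $(1,3),(0,2),(1,1),(2,0),(3,1)$ are pairwise distinct, so $\upsilon_\mathcal{L}=5$. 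Solving the eigenvalue equations for a vector of the form $v=(v_1,0,v_3,0,v_5)$ yields the balance conditions $\lambda=w_1$, $w_2+w_3=w_1$, and $w_4=w_1$ on the edge weights $w_1,\dots,w_4$; for instance $(w_1,w_2,w_3,w_4)=(3,1,2,3)$ makes $v=(1,0,-3,0,2)$ an eigenvector of $L$ that vanishes on both leaders. Then $rank(\Gamma)\le 4<5=\upsilon_\mathcal{L}$, so $\upsilon_\mathcal{L}$ is not a universal lower bound either, and the two examples together complete the proof.

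The main obstacle is this second construction. The naive route to $rank(\Gamma)<n$ is to use symmetric weights so that a graph automorphism fixing the leaders renders an antisymmetric mode unreachable; but any such automorphism swaps two followers and hence equalizes their DL vectors, which lowers $\upsilon_\mathcal{L}$ in lockstep with the rank and defeats the counterexample. The point is to manufacture an uncontrollable mode \emph{without} a node-swapping symmetry, which is precisely why the leaders must be placed at interior (degree-$2$) nodes and the weights tuned to meet the balance conditions above: the interior leaders permit an eigenvector to cancel at each leader, while all pairwise distances---and therefore all DL vectors---remain distinct.
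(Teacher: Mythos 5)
Your proof is correct, and it takes a genuinely different route from the paper's. The paper establishes both failures by invoking the uniform-weight controllability classification of path and cycle graphs due to Parlangeli and Notarstefano: a uniform-weight path with $n=2^k$ nodes and a single non-leaf leader is completely controllable, yet the non-leaf leader forces two nodes to share a distance, so $rank(\Gamma)=n>\upsilon_\mathcal{L}$; and a uniform-weight cycle with an odd composite number of nodes admits a leader pair that is uncontrollable while the unequal clockwise/counterclockwise path lengths keep all DL vectors distinct, so $rank(\Gamma)<n=\upsilon_\mathcal{L}$. You instead build self-contained counterexamples with tuned weights: the $C_4$ computation is a direct rank verification (the reduction of $[\,B\;\;(-L)B\,]$ to $e_1,e_3,ae_2+de_4,be_2+ce_4$ with independence iff $ac\neq bd$ checks out), and your five-node path with interior leaders $\{2,4\}$, weights $(3,1,2,3)$, and eigenvector $v=(1,0,-3,0,2)$ is a clean left-annihilator argument---the balance conditions $w_1=w_2+w_3=w_4$ and the eigenvector equations all verify, and since $L$ is symmetric, $v^{T}\Gamma=0$ indeed gives $rank(\Gamma)\leq 4<5=\upsilon_\mathcal{L}$. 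What each approach buys: yours needs no external citations and is verifiable entirely by hand, whereas the paper's examples hold under \emph{uniform} weights, showing the failure of $\upsilon_\mathcal{L}$ as a bound is not an artifact of weight-tuning; note that your first example degenerates to rank $3$ when $ac=bd$ (e.g., uniform weights) and your second requires $w_1=w_2+w_3$, which uniform weights cannot satisfy, so your construction genuinely depends on the freedom to choose $w$. Since ``universal'' quantifies over all weighting functions $w:E\mapsto\mathbb{R}^+$, exhibiting specific weights is perfectly legitimate, and both proofs are complete; your closing observation that an automorphism-based route would equalize DL vectors in lockstep with the rank is a correct diagnosis of why the weight-tuned eigenvector construction is needed in your setting.
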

\begin{proof}
We prove this statement by providing some examples both for $rank(\Gamma)>\upsilon_\mathcal{L}$ and for $rank(\Gamma)<\upsilon_\mathcal{L}$. 

1) A path graph with uniform edge weights is controllable from any single node if and only if the the number of nodes is a power of two, i.e. $n=2^k$ for some $k \in \mathbb{N}$ \cite{Parlangeli12}.  Note that, for a path graph with a single leader, $\delta_\mathcal{L}=n$ if and only if the leader is a leaf node. Hence, for any $n=2^k$, if a non-leaf node is the only leader of a path graph with uniform edge weights, then $rank(\Gamma)> \upsilon_\mathcal{L}$.

2) A cycle graph with uniform edge weights is controllable from any pair of nodes if and only if the the number of nodes is a prime number \cite{Parlangeli12}. Consider any cycle graph of $n$ nodes such that $n$ is an odd composite number. For such a graph with two leaders, the clockwise and counterclockwise paths between the leaders have different lengths. Hence, any pair of nodes that have equal distances from one of the leaders have different distances from the other leader, i.e.  $\upsilon_\mathcal{L}=n$. Furthermore, in light of \cite{Parlangeli12}, there exists a pair of nodes that render the system uncontrollable if they are assigned as the leaders. If such a pair is assigned as the leaders, then $rank(\Gamma)< \upsilon_\mathcal{L}$.
\end{proof}

We would like to conclude this section with a remark regarding the application of the presented results to directed networks.

\begin{remark} (\emph{Directed Networks}) \label{directrem}The formulation in (\ref{neigh})-(\ref{Gamma_M}) is applicable to directed networks with the interaction graph $\mathcal{G}=(V,E)$, where each $(i,j) \in E$ denotes that $i$ is influenced by $j$ as in (\ref{consensus}). For such a network, powers of the adjacency matrix $\mathcal{A}$ have the property that $[\mathcal{A}^r]_{ij}=0$ for all $0\leq r < dist(i,j)$ and $[\mathcal{A}^{dist(i,j)}]_{ij}\neq0$, where $dist(i,j)$ is the length of the shortest directed path from $i$ to $j$. Hence, by using the corresponding DL vectors as in (\ref{dldef}), the results in both Lemma 3.1 and Theorem 3.2 can be extended to strongly connected networks (i.e., there exists a directed path from every node to every other node). 
\end{remark}

\vspace{-0.15in}

\section{An Algorithm for Computing the lower bound}
\label{compute}

In this section, we present an algorithm to compute the proposed lower bound, $\delta_\mathcal{L}$. Note that the main contribution of this work is the lower bound itself, and the algorithm in this section is provided to facilitate some practical use of our result.  

Let $S=\{d_1, d_2,\hdots, d_n \}$ be the set of all DL vectors for a given leader-follower network. Given these vectors, we present an iterative way of generating the longest PMI sequences. Let $C_p$ be the set of all DL vectors that can be assigned as the $p^{th}$ element of such a sequence $D$. According to these definitions, $C_1=S$. Once a vector from $C_p$ is assigned as the $p^{th}$ element of the sequence, $D_p$, and an index $\alpha(p)$ satisfying (\ref{rule}) is chosen, the resulting $C_{p+1}$ can be obtained from $C_p$ as

\begin{equation}
\label{C_set}
 {C}_{p+1}=  \{d_i \in {C}_p \mid d_{i,\alpha(p)} > D_{p,\alpha(p)}\}.
  \end{equation}

In order to obtain longer sequences, this iteration must be continued until $C_p= \emptyset$.  However, in general there are too many possible sequences that can be obtained this way, and it is not feasible to find the maximum length for PMI sequences by searching among all these possibilities. Instead, we present a necessary condition for a PMI sequence to have the maximum possible length. This necessary condition significantly lowers the number of sequences to consider.
\begin{lem}
 \label{algorithm}
Let $D$ be a PMI sequence of DL vectors with the maximum possible length, then its $p^{th}$ entry, $D_p$, satisfies 
\begin{equation}
\label{seq_eq}
D_{p,\alpha(p)}= \min_{d_i\in C_p}d_{i,\alpha(p)}.
\end{equation}
\end{lem}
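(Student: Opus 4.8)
The plan is to argue by contradiction using an exchange (insertion) argument on the sequence. Since $D_p \in C_p$ by construction, we automatically have $D_{p,\alpha(p)} \geq \min_{d_i \in C_p} d_{i,\alpha(p)}$, so it suffices to rule out strict inequality. Suppose, for contradiction, that $D$ is a PMI sequence of maximum length $\delta_\mathcal{L}$, yet $D_{p,\alpha(p)} > \min_{d_i \in C_p} d_{i,\alpha(p)}$ for some position $p$. Let $d^\star \in C_p$ attain this minimum, so $d^\star_{\alpha(p)} < D_{p,\alpha(p)}$. The intuitive content is that picking a larger-than-necessary entry at step $p$ only shrinks the candidate pool $C_{p+1}$ in (\ref{C_set}), hence can never help build a longer sequence; the work is in making this precise.

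First I would form a candidate longer sequence by inserting $d^\star$ immediately before $D_p$ (shifting $D_p, D_{p+1}, \dots$ one position to the right) and assigning it the same index $\alpha(p)$. The bulk of the proof is then verifying that this enlarged sequence still satisfies (\ref{rule}) at three groups of positions. For the positions $q < p$, whose assigned indices are unchanged, I would use the recursion (\ref{C_set}): since $d^\star \in C_p \subseteq C_{q+1}$, its $\alpha(q)^{th}$ entry satisfies $d^\star_{\alpha(q)} > D_{q,\alpha(q)}$, so the inserted vector respects the constraints of every earlier vector. For the inserted vector itself (with index $\alpha(p)$), the subsequent vectors are exactly the old $D_p, D_{p+1}, \dots$; the old $D_p$ has $\alpha(p)^{th}$ entry $D_{p,\alpha(p)} > d^\star_{\alpha(p)}$, and each later vector lies in $C_{p+1}$ (by nestedness $D_q \in C_q \subseteq C_{p+1}$ for $q > p$) and hence has $\alpha(p)^{th}$ entry strictly exceeding $D_{p,\alpha(p)} > d^\star_{\alpha(p)}$. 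For the shifted vectors, neither their indices nor their relative order change, so their constraints carry over verbatim.

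The remaining point, and the one requiring the most care, is confirming that $d^\star$ is genuinely new and not a repetition of an existing element, since a PMI sequence cannot contain two identical vectors (as used in deriving (\ref{udlb})). Here I would again invoke (\ref{C_set}): by construction $C_p$ excludes each $D_q$ with $q < p$, because such a vector has $\alpha(q)^{th}$ entry equal to, not greater than, $D_{q,\alpha(q)}$, so $d^\star \neq D_q$ for $q < p$; the case $d^\star = D_p$ is impossible since $d^\star_{\alpha(p)} < D_{p,\alpha(p)}$; and every later $D_q$ lies in $C_{p+1}$ and therefore has $\alpha(p)^{th}$ entry exceeding $D_{p,\alpha(p)} > d^\star_{\alpha(p)}$, ruling out $d^\star = D_q$ for $q > p$. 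Thus the enlarged sequence is a valid PMI sequence of length $\delta_\mathcal{L} + 1$, contradicting the maximality in (\ref{lbeq}), which forces equality in (\ref{seq_eq}). I expect the main obstacle to be purely organizational, namely laying out these three position checks plus the distinctness verification cleanly, rather than any conceptual difficulty.
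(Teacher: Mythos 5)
Your proof is correct and takes essentially the same route as the paper's: an exchange argument that inserts a vector $d_j \in C_p$ with $d_{j,\alpha(p)} < D_{p,\alpha(p)}$ immediately before $D_p$ (reusing the index $\alpha(p)$ and leaving the rest of $D$ intact) to obtain a longer PMI sequence, contradicting maximality. You merely spell out the verification the paper leaves implicit, and your separate distinctness check is harmless but redundant, since any repetition would already violate condition (\ref{rule}), which you verify directly at every position.
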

\begin{proof}
For the sake of contradiction, assume that this is not true for a PMI sequence $D$ with the maximum length. Then, there exists a DL vector $d_j \in  C_p$ such that $d_{j,\alpha(p)} < D_{p,\alpha(p)}$. By the construction of a PMI sequence, $d_j$ can not be added to this sequence after $D_p$.  However, $d_j$ can be added right before $D_p$ while keeping all the other parts of $D$ the same since $d_{j,\alpha(p)}$ can be selected to satisfy (\ref{rule}) in the resulting sequence. Hence, it is possible to obtain a longer PMI sequence, which leads to the contradiction that $D$ does not have the maximum possible length.
\end{proof}

In light of (\ref{C_set}), as far as the sequence length is concerned, the only important decision at each step $p$ in building a PMI sequence satisfying (\ref{seq_eq}) is the choice of $\alpha(p)$. Based on this observation, we propose Algorithm \ref{lbalg} for computing the lower bound.

\begin{algorithm}
\caption{}
\label{lbalg}
\begin{algorithmic}[1]
\footnotesize
\State \textbf{initialize:} $\mathcal{C}_1=\{\{d_1, d_2, \hdots, d_n\}\}$; $p=1$
\While{$\mathcal{C}_{p,y}\ne\emptyset$ for some $y\in\{1,\cdots,|\mathcal{C}_p|\}$}
\State $q=1$
\For{$i=1:|\mathcal{C}_p|$}
\If{$\mathcal{C}_{p,i}\neq \emptyset$}
\For{$j=1:m$}
\State $\mathcal{C}_{p+1,q} =  \{d_t\in\mathcal{C}_{p,i}\mid\;d_{t,j} > \min\limits_{d_s\in\mathcal{C}_{p,i}}d_{s,j}\}$
\State $q=q+1$
\EndFor
\EndIf
\EndFor
\State $p=p+1$
\EndWhile
\State \textbf{return} $p-1$
\end{algorithmic}
\end{algorithm}
\normalsize

In Algorithm \ref{lbalg}, the variable $\mathcal{C}_p$ is the multiset, where each element $\mathcal{C}_{p,i}$ is the $C_p$ resulting from (\ref{C_set}) for specific choices of $\alpha(1), \hdots, \alpha(p-1)$, subject to the corresponding PMI sequences satisfying (\ref{seq_eq}). The main while loop iterates as long as there exists a longer PMI sequence that satisfies (\ref{seq_eq}). Note that for each $\mathcal{C}_{p,i} \neq \emptyset$, there are $m$ (number of leaders) different $\mathcal{C}_{p+1,q}$, each corresponding to a particular choice of $\alpha(p)$. As such, Algorithm \ref{lbalg} computes $\delta_\mathcal{L}$ by generating no more than $m^{\delta_\mathcal{L}}$ elements $\mathcal{C}_{p,i}$. 

\begin{prop}
 \label{algorithm_prop}
Given the DL vectors for any connected leader-follower network, Algorithm \ref{lbalg} returns $\delta_\mathcal{L}$.
\end{prop}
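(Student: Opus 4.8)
The goal is to prove that Algorithm \ref{lbalg} correctly returns $\delta_\mathcal{L}$, the length of the longest PMI sequence. The plan is to establish two things: first, that every value $p-1$ returned by the algorithm is \emph{achievable} as the length of some PMI sequence satisfying (\ref{seq_eq}), and second, that the algorithm explores \emph{all} PMI sequences satisfying the necessary condition (\ref{seq_eq}) of Lemma \ref{algorithm}, so that no longer admissible sequence is missed. Since Lemma \ref{algorithm} guarantees that some maximum-length sequence satisfies (\ref{seq_eq}), restricting attention to such sequences loses nothing, and matching the algorithm's search to exactly this class will give the result.

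First I would formalize the correspondence between the multiset entries $\mathcal{C}_{p,i}$ and PMI sequences. The plan is to prove by induction on $p$ the following invariant: the nonempty entries $\mathcal{C}_{p,i}$ of the multiset $\mathcal{C}_p$ are in bijection with the partial PMI sequences $(D_1,\dots,D_{p-1})$ of length $p-1$ that satisfy (\ref{seq_eq}) at every step, where each such sequence arises from a particular choice of the indices $\alpha(1),\dots,\alpha(p-1)$, and where $\mathcal{C}_{p,i}$ equals precisely the candidate set $C_p$ given by (\ref{C_set}) for that sequence. The base case $p=1$ holds since $\mathcal{C}_1=\{\{d_1,\dots,d_n\}\}$ corresponds to the empty partial sequence with candidate set $C_1=S$. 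For the inductive step, I would observe that line 7 of the algorithm builds, for each nonempty $\mathcal{C}_{p,i}$ and each $j\in\{1,\dots,m\}$, the set $\{d_t\in\mathcal{C}_{p,i}\mid d_{t,j}>\min_{d_s\in\mathcal{C}_{p,i}}d_{s,j}\}$; by (\ref{seq_eq}) with $\alpha(p)=j$ and the update rule (\ref{C_set}), this is exactly the candidate set $C_{p+1}$ obtained by appending to $D$ a vector attaining the minimum in coordinate $j$ and setting $\alpha(p)=j$. Thus the children of $\mathcal{C}_{p,i}$ enumerate exactly the admissible one-step extensions, which preserves the invariant.

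Given the invariant, the two directions follow. For \emph{soundness}, whenever some $\mathcal{C}_{p,y}\neq\emptyset$ the invariant produces a length-$(p-1)$ partial PMI sequence satisfying (\ref{seq_eq}) with a nonempty candidate set; picking any vector attaining the coordinate-minimum in that set extends it to a valid PMI sequence of length $p$, so the while loop never advances past a value that is unachievable, and the returned $p-1$ is at most $\delta_\mathcal{L}$. For \emph{completeness}, I would invoke Lemma \ref{algorithm}: there exists a maximum-length PMI sequence $D^\star$ of length $\delta_\mathcal{L}$ satisfying (\ref{seq_eq}) at every step. By the invariant, each of its prefixes corresponds to a nonempty multiset entry, so in particular at iteration $p=\delta_\mathcal{L}$ the entry corresponding to the length-$(\delta_\mathcal{L}-1)$ prefix of $D^\star$ is nonempty, forcing the loop to execute at least $\delta_\mathcal{L}$ times; hence the returned value is at least $\delta_\mathcal{L}$. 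Combining the two inequalities gives that the algorithm returns exactly $\delta_\mathcal{L}$.

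The main obstacle I anticipate is making the bijection in the invariant precise rather than merely suggestive, because several distinct index choices $\alpha(1),\dots,\alpha(p-1)$ can yield the \emph{same} candidate set $C_p$, and the algorithm stores these as \emph{separate} entries of the multiset (which is why $\mathcal{C}_p$ is a multiset, not a set). I would handle this by indexing the invariant on the sequences of index choices rather than on the resulting sets, so that the correspondence is genuinely one-to-one on the level of admissible $(\alpha(1),\dots,\alpha(p-1))$-tuples and the associated partial sequences, and the multiset multiplicities are accounted for automatically. A secondary point to verify carefully is the termination and halting condition: the while loop stops as soon as every entry of $\mathcal{C}_p$ is empty, and I would confirm that this happens exactly when no partial sequence of length $p-1$ satisfying (\ref{seq_eq}) admits any extension, i.e. when $p-1=\delta_\mathcal{L}$, which is finite because DL vectors have bounded nonnegative integer entries and each extension strictly increases the relevant coordinate minimum.
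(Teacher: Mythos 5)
Your proof is correct and takes essentially the same route as the paper's: the paper's (much terser) argument likewise observes that line 7 generates, over all admissible choices of $\alpha(1),\hdots,\alpha(p-1)$, exactly the candidate sets $C_{p+1}$ of the PMI sequences satisfying (\ref{seq_eq}), and then invokes Lemma \ref{algorithm} to conclude that the loop terminates precisely at length $\delta_\mathcal{L}$. Your inductive invariant and the explicit soundness/completeness split simply formalize what the paper states in two sentences, including the multiset bookkeeping that the paper leaves implicit.
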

\begin{proof}
By combining (\ref{C_set}) and (\ref{seq_eq}) (line 7 of Algorithm \ref{lbalg}), at each step $p$, Algorithm \ref{lbalg} builds all the possible $C_{p+1}$ that correspond to the PMI sequences of $p$ vectors satisfying (\ref{seq_eq}). Hence, when $\mathcal{C}_{p,y} = \emptyset$ for every $y\in\{1,\cdots,|\mathcal{C}_p|\}$, there is not a longer PMI sequence that satisfies the necessary condition in Lemma \ref{algorithm}. Consequently, Algorithm \ref{lbalg} always returns the length of the longest PMI sequence, $\delta_\mathcal{L}$. 
\end{proof}

Note that, in light of Remark \ref{directrem}, Algorithm \ref{lbalg} can also be used to compute the proposed lower bound for any directed network with a strongly connected interaction graph.

For a sample run of Algorithm \ref{lbalg}, consider the network in Fig. \ref{exfig}. For this example, the algorithm terminates after the fifth iteration of the while loop, and $\delta_\mathcal{L}=5$. The resulting flow of Algorithm \ref{lbalg} can be represented via a tree diagram as illustrated in Fig. \ref{fig:graph}. 

\begin{figure}[htb]
\begin{center}
\includegraphics[scale=0.65]{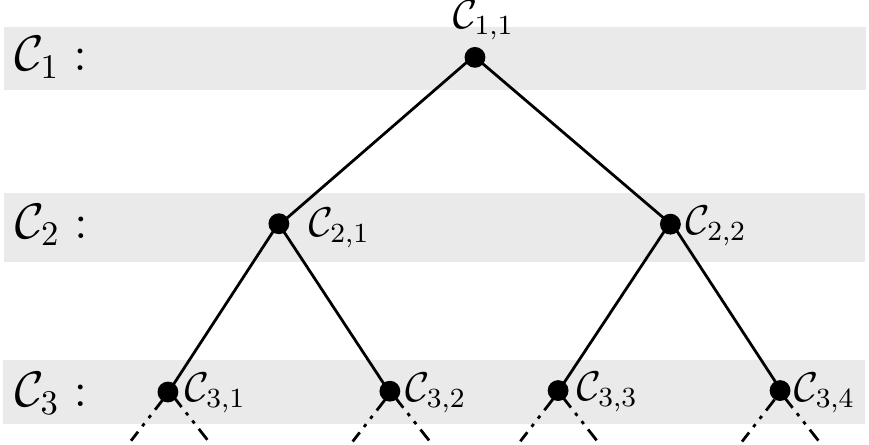}
\caption{An illustration of the flow of Algorithm \ref{lbalg} for the network in Fig. \ref{exfig}.} 
\label{fig:graph}
\end{center}
\end{figure}

In Fig. \ref{fig:graph}., each node $\mathcal{C}_{p,i}$ at a given level $p>1$ corresponds to an element of $\mathcal{C}_p$ that is computed in the $(p-1)^{st}$ iteration of the while loop. The left child of a node $\mathcal{C}_{p,i}$ is obtained from $\mathcal{C}_{p,i}$ by deleting all the DL vectors whose first entries are equal to the minimum value of the first entries among all the DL vectors in $\mathcal{C}_{p,i}$, i.e., obtained from $\mathcal{C}_{p,i}$ as per (\ref{seq_eq}) and (\ref{C_set}) for $\alpha(p)=1$. Similarly, the right child of a node $\mathcal{C}_{p,i}$ is obtained by following the same procedure for $\alpha(p)=2$. Accordingly, the elements in the first three levels are  
{\fontsize{0.22cm}{3pt}
$$
\begin{array}{llll}
\mathcal{C}_{1,1}= \left\{
\left[\begin{array}{c} 0\\3 \end{array} \right],
\left[\begin{array}{c} 1\\2 \end{array} \right],
\left[\begin{array}{c} 1\\3 \end{array} \right],
\left[\begin{array}{c} 2\\1 \end{array} \right],
\left[\begin{array}{c} 2\\2 \end{array} \right],
\left[\begin{array}{c} 3\\0 \end{array} \right]
\right\};\\
\mathcal{C}_{2,1}=\left\{
\left[\begin{array}{c} 1\\2 \end{array} \right],
\left[\begin{array}{c} 1\\3 \end{array} \right],
\left[\begin{array}{c} 2\\1 \end{array} \right],
\left[\begin{array}{c} 2\\2 \end{array} \right],
\left[\begin{array}{c} 3\\0 \end{array} \right]
\right\};\\
\mathcal{C}_{2,2} = \left\{
\left[\begin{array}{c} 0\\3 \end{array} \right],
\left[\begin{array}{c} 1\\2 \end{array} \right],
\left[\begin{array}{c} 1\\3 \end{array} \right],
\left[\begin{array}{c} 2\\1 \end{array} \right],
\left[\begin{array}{c} 2\\2 \end{array} \right]
\right\};\\
\mathcal{C}_{3,1}=\left\{
\left[\begin{array}{c} 2\\1 \end{array} \right],
\left[\begin{array}{c} 2\\2 \end{array} \right],
\left[\begin{array}{c} 3\\0 \end{array} \right]
\right\}; \\

\mathcal{C}_{3,2} = \mathcal{C}_{3,3}=\left\{
\left[\begin{array}{c} 1\\2 \end{array} \right],
\left[\begin{array}{c} 1\\3 \end{array} \right],
\left[\begin{array}{c} 2\\1 \end{array} \right],
\left[\begin{array}{c} 2\\2 \end{array} \right]
\right\};\\
\mathcal{C}_{3,4}=\left\{
\left[\begin{array}{c} 0\\3 \end{array} \right],
\left[\begin{array}{c} 1\\2 \end{array} \right],
\left[\begin{array}{c} 1\\3 \end{array} \right],
\left[\begin{array}{c} 2\\2 \end{array} \right]
\right\}.
\end{array}
$$
}

\vspace{-0.15in}
\section{Distance-Based Leader Selection for Controllability}
\label{sec:lselect}
One of the main attributes of the proposed lower bound $\delta_\mathcal{L}$ is its independence of the edge weights. In contrast, $rank(\Gamma)$ depends on the values of the edge weights unless there is some constraint such as all the weights being equal. Consequently, computing $\delta_\mathcal{L}$ typically requires significantly less information about the overall system. This minimality of the required information makes the lower bound attractive in many applications such as leader selection for controllability.  

Leader selection problems typically require finding a leader set $\mathcal{L}$ that optimizes a system objective such as robustness, mixing time, or controllability (e.g., \cite{Lin14,Clark12,Yasin13,Aguilar15}).  For instance, consider the problem of finding a minimum number of leaders that render a given network controllable under the resulting leader-follower dynamics. If the edge weights are known (or if they are known to be identical), then the rank of the controllability matrix can be computed for any set of leaders. Hence, a possible, yet not scalable, way to find a minimal set of leaders for controllability is to execute an exhaustive search. Note that, aside from the complexity issues, the rank computation is not applicable if the edge weights are unknown and arbitrary. In such cases, the leader selection problem needs to be solved by leveraging the structural properties of the interaction graph.

One approach to achieving controllability under arbitrary coupling weights is to choose the minimal $\mathcal{L}$ that achieves structural controllability (e.g., \cite{Liu11}). Structural controllability implies that the selected leaders provide complete controllability for some, not all, weighting functions ${w: E \mapsto \mathbb{R}^+}$. Hence, this approach may fall short in some applications, especially when there are constraints on the admissible edge weights. For instance, if all the edge weights in a network are equal by design, then it is known that a complete graph is not controllable by any single leader \cite{Rahmani09} whereas a single leader is enough to achieve structural controllability \cite{Liu11}.  Alternatively, the notion of strong structural controllability can be employed in the leader selection (e.g., \cite{Chapman13}). In this regard, the proposed bound  $\delta_\mathcal{L}$ can be used to ensure that the dimension of the controllable subspace is not smaller than some desired value, $k \in \{1,2, \hdots, n\}$, for any $w: E \mapsto \mathbb{R}^+$ by formulating the leader selection problem as

 \begin{equation}
 \label{lselect}
 \begin{aligned}
 & \underset{\mathcal{L} \in 2^V}{\text{minimize}} 
 & & |\mathcal{L}| \\
 & \text{subject to}
 & & \delta_\mathcal{L} \geq k,  \\
 \end{aligned}
\end{equation}

In light of Theorem \ref{lbound}, any element in the feasible set of the problem in (\ref{lselect}) renders $rank(\Gamma) \geq k$ for any weighting function $w: E \mapsto \mathbb{R}^+$. Note that the problem in (\ref{lselect}) is always feasible, i.e.,  for any given $\mathcal{G}=(V,E)$ and $k \in \{1,2, \hdots, n\}$, there always exists some ${\mathcal{L} \subseteq V}$ such that $\delta_\mathcal{L}\geq k$. Indeed, the feasibility can be shown by considering the trivial case, ${\mathcal{L}=V}$. In that case, for each DL vector, the entry that contains the distance of the corresponding node from itself satisfies (\ref{rule}) for any sequence of the corresponding DL vectors. Consequently, any sequence of the DL vectors is a PMI sequence if $\mathcal{L}=V$, and $\delta_V=n$. Note that the feasibility would not be guaranteed if the problem was posed by using $\mu_{\mathcal{L}}$ instead of $\delta_{\mathcal{L}}$ since $\mu_{\mathcal{L}}$ is always upper bounded by one plus the graph diameter. An example, where the leaders were assigned by solving (\ref{lselect}) for $k=n$, is illustrated in Fig. \ref{ex_ls}. For this example, the leaders were selected through an exhaustive search by first looking for a single-leader solution and incrementing the number of leaders until a solution exists. Algorithm \ref{lbalg} was used to compute $\delta_\mathcal{L}$ for each candidate $\mathcal{L}$.
\begin{figure}[htb]
\begin{center}
\includegraphics[scale=0.75]{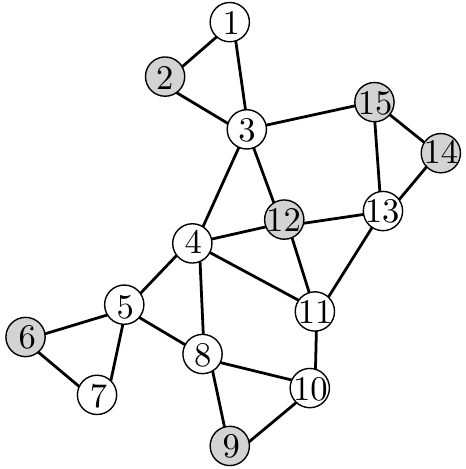}
\caption{ A graph of 15 nodes, $\mathcal{G}=(V,E)$, and a minimal selection of leaders (shown in gray), $\mathcal{L}=\{2,6,9,12,14,15\}$,  such that $\delta_\mathcal{L}=15$. The network is completely controllable via $\mathcal{L}$ for any weighting function $w: E \mapsto \mathbb{R}^+$.}
\label{ex_ls}
\end{center}
\end{figure}


\vspace{-0.2in}
\section{Conclusion}
\label{conclusion}
In this technical note, we presented that the distances between the leaders and the followers on the interaction graph contain some fundamental information about the controllability of the leader-follower networks. In particular, we used the distance-to-leaders (DL) vectors to derive a tight lower bound on the dimension of the controllable subspace. The proposed bound is applicable to networks with arbitrary interaction graphs and  weighting functions $w: E \mapsto \mathbb{R}^+$. We also provided some connections between the proposed lower bound and a pair of closely related distance-based measures, namely the maximum distance from the leaders and the number of distinct DL vectors. While the results were presented for undirected networks, we also showed how they can be extended to directed networks. Furthermore, we presented an algorithm for computing the lower bound. The proposed bound may find its applications in various networked control problems, especially when the edge weights are unknown. As a prominent application, we presented how it can be utilized to find a minimal set of leaders that ensure the controllability of a leader-follower network with a given interaction graph under any weighting function $w: E \mapsto \mathbb{R}^+$. 

\vspace{-0.1in}

\end{document}